%
%
\documentclass[fleqn,10pt]{wlscirep}
\usepackage{graphicx}
\usepackage{amsmath}
\usepackage{amssymb}
\usepackage{mathrsfs}
\usepackage{amsthm}
\usepackage{bm}
\usepackage{url}
\usepackage{csquotes}
\MakeOuterQuote{"}
\newtheorem{theorem}{Theorem}

\newtheorem{corollary}{Corollary}
\newcommand{\mrm}[1]{\mathrm{#1}}
\def\vec#1{\bm{#1}} 
\newcommand{\tr}{\operatorname{tr}}

\newcommand{\diag}{\operatorname{diag}}

\newcommand{\rmT}{\mathrm{T}}

\title{Steering Bell-diagonal states}

\author[1]{Quan Quan}
\author[2,3*]{Huangjun Zhu}
\author[$1^\ddag$]{Si-Yuan Liu}
\author[4,5]{Shao-Ming Fei}
\author[6,7,1]{Heng Fan}
\author[1,8]{Wen-Li Yang}
\affil[1]{Institute of Modern Physics, Northwest University, Xi'an 710069, China}
\affil[2]{Institute for Theoretical Physics, University of Cologne,  Cologne 50937, Germany}
\affil[3]{Perimeter Institute for Theoretical Physics,  Waterloo,  Ontario N2L 2Y5, Canada}
\affil[4]{School of Mathematical Sciences, Capital Normal University, Beijing 100048, China}
\affil[5]{Max-Planck-Institute for Mathematics in the Sciences,
Leipzig 04103, Germany}
\affil[6]{Institute of Physics, Chinese Academy of Sciences, Beijing 100190, China}
\affil[7]{Collaborative Innovation Center of Quantum Matter, Beijing 100190, China}
\affil[8]{Center for Mathematics and Information Interdisciplinary Sciences, Beijing,
100048, China}
\affil[*]{Corresponding author: hzhu1@uni-koeln.de, hzhu@pitp.ca }
\affil[$^\ddag$]{lsy5227@163.com}

\begin{abstract}
We investigate the steerability of  two-qubit Bell-diagonal states under  projective measurements by the steering party. In the  simplest nontrivial scenario of two projective measurements, we solve this problem completely by virtue of the connection between the steering problem and the joint-measurement problem.
A  necessary and sufficient criterion is derived together with
 a simple geometrical interpretation. Our study shows that a Bell-diagonal state is steerable by two projective measurements iff it violates the
Clauser-Horne-Shimony-Holt (CHSH) inequality, in sharp contrast with the  strict hierarchy expected between steering and Bell nonlocality. We also introduce a steering measure  and clarify its connections with concurrence and the volume of the steering ellipsoid.
 In particular, we determine the maximal concurrence and ellipsoid volume of  Bell-diagonal states that are not steerable by two projective measurements. Finally, we explore the steerability of Bell-diagonal states under three projective measurements. A simple  sufficient criterion is derived, which
can detect the steerability of many states that are not steerable by two projective measurements.  Our study offers valuable insight on steering of Bell-diagonal states as well as the connections between entanglement, steering, and Bell nonlocality.
\end{abstract}

\begin{document}

\flushbottom
\maketitle

\thispagestyle{empty}

\section*{Introduction}

Einstein-Podolsky-Rosen (EPR) steering~\cite{EinsPR35}, as noticed by Schr\"{o}dinger~\cite{Schr35}, is an intermediate type of nonlocal
correlations that sits between entanglement and Bell nonlocality. In the framework of modern
quantum information theory, this "spooky action" can be described as a task of entanglement verification
with an untrusted party, as explained by Wiseman \emph{et
al.} ~\cite{WiseJD07, JoneWD07}. It hinges on the question of whether Alice can convince
Bob that they share an entangled state, despite the
fact that Bob does not trust Alice. In order to achieve this task, Alice  needs to change Bob's
state remotely in a way that would be impossible if they shared classical
correlations only.  Contrary to
entanglement and Bell nonlocality,  steering features a fundamental asymmetry because   the two observers play different roles in the steering test \cite{WiseJD07, JoneWD07, BowlVQB14}. Recently, growing attention  has been directed to  steering  because of its potential applications in quantum information processing, such as quantum key
distribution (QKD)~\cite{BranCWSW12}, secure teleportation~\cite{Reid13}, and entanglement assisted subchannel
discrimination~\cite{PianW15}.

Two basic questions concerning steering  are its detection and quantification. One approach for detecting steering is to  prove the impossibility of constructing any non-steering model ~\cite{WiseJD07, JoneWD07}. A practical alternative is to demonstrate the violations of various steering inequalities  \cite{Reid89, CavaJWR09, WalbSGT11,Puse13, KogiSCA15, ZhuHC15}.    The first steering inequality was derived by Reid in 1989~\cite{Reid89}, which is applicable to continuous variable systems, as considered in EPR's original argument. General theory of experimental steering criteria were  developed in Ref.~\cite{CavaJWR09}, followed by many other works \cite{WalbSGT11,  Puse13, KogiSCA15, ZhuHC15}. In line with theoretical development, a loophole-free steering experiment was  reported in Ref.~\cite{Witt12}, and one-way steering was demonstrated in Ref.~\cite{Hand12}.
In addition, steering detection based on all-versus-nothing argument was  proposed in  Refs.~\cite{ChenYWS13, WuCYS13}, along with an experimental demonstration \cite{SunXYW14}. Meanwhile, steering quantification has received increasing attention in the past few years \cite{SkrzNC14, PianW15, KogiLRA15}, which leads to   several useful steering measures, such as steerable weight \cite{SkrzNC14} and  steering robustness~\cite{PianW15}.

Despite these fruitful achievements, steering detection and quantification have remained challenging tasks, and many basic questions are poorly understood. For example, no conclusive criterion is known for determining the steerability of  generic two-qubit states except for Werner states \cite{WiseJD07, JoneWD07}.
Even for  Bell-diagonal
states,  only a few partial results are known concerning their steerability,
including
several necessary criteria and several sufficient
criteria \cite{ChenSYWO12,JevtPJR14,JevtHAZW15}; further progresses are thus highly desirable.
In addition, many results in the literature rely heavily on numerical calculation and lack intuitive pictures. Analytical results are quite rare since difficult optimization problems are often involved in solving steering problems.

In this work, we investigate the steerability
of  two-qubit Bell-diagonal states under projective measurements by the steering party. These states are appealing to both theoretical and experimental studies since they have a relatively simple structure and are particularly suitable for illustrating ideas and cultivating intuition. In addition, generic two-qubit states can be turned into  Bell-diagonal
states by invertible stochastic local operation and classical communication
(SLOCC) \cite{VersDM01}, so any progress on  Bell-diagonal states may potentially help understand two-qubit states in general.

We first consider the steerability
of  Bell-diagonal states under the simplest nontrivial measurement
setting on the steering party, that is, two projective measurements. We solve this problem completely by virtue of the connection between the steering problem and the joint-measurement problem \cite{QuinVB14, UolaMG14, UolaBGP15, ZhuHC15}.  In particular, we derive a necessary and sufficient steering criterion analytically and provide a simple geometrical interpretation. Such analytical results are valuable but quite rare in the literature on steering. Our study  leads to a measure of steering, which turns out to equal the maximal violation of the Clauser-Horne-Shimony-Holt (CHSH)  inequality \cite{ClauHSH69, HoroHH95}. As an implication, a Bell-diagonal state is steerable by two projective measurements iff it violates the CHSH inequality. This conclusion presents a sharp contrast with the observation that steering is  necessary but usually not sufficient for Bell nonlocality \cite{WiseJD07, JoneWD07, QuinVCA15}. On the other hand, in the special case of rank-2 Bell-diagonal states,  entanglement is sufficient to guarantee steering and Bell nonlocality, in line with the spirit of Gisin's theorem \cite{Gisi91, ChenSXW15}.
The relations between our steering measure and concurrence as well as the volume of the steering ellipsoid
are  then clarified. Quite surprisingly, the steering measure and the volume of the steering ellipsoid seem to display opposite behaviors for states with given concurrence.

Finally, we explore the steerability of Bell-diagonal states under three projective measurements. Although such problems are generally very difficult to address, we
derive a nontrivial  sufficient criterion, which also has a simple geometrical interpretation. This criterion can detect the steerability of many states that are not steerable by two projective measurements. The relation between entanglement and steering in this scenario is also clarified.

\section*{Setting up the stage}

Consider  two remote parties, Alice and Bob, who share a bipartite quantum state $\rho$ with reduced states $\rho_\mathrm{A}$ and $\rho_\mathrm{B}$ for the two parties, respectively. Alice can perform a collection of local measurements as characterized by a collection of  positive-operator-valued measures (POVMs) $\{A_{a|x}\}_{a,x}$, where $x$ labels the POVM and $a$ labels the outcome in each POVM.  Recall that a POVM $\{A_{a|x}\}_a$ is composed of a set of positive operators that   sum up to the identity, that is, $\sum_a A_{a|x}=I$. The whole collection of POVMs $\{A_{a|x}\}_{a,x}$ is called  a  \emph{measurement assemblage}. If Alice  performs the measurement $x$ and obtains the outcome $a$, then  Bob's subnormalized reduced state is given by $\rho_{a|x}=\tr_\mathrm{A}[(A_{a|x}\otimes
I)\rho]$. Note that $\sum_a\rho_{a|x}=\rho_\mathrm{B}$ is  independent  of the measurement chosen by Alice, as required by the no signaling principle. The set of subnormalized states $\{\rho_{a|x}\}_a$ for a given measurement $x$ is an \emph{ensemble}  for $\rho_\mathrm{B}$, and the whole collection of ensembles $\{\rho_{a|x}\}_{a,x}$ is a \emph{state assemblage} \cite{Puse13}.

The  state assemblage $\{\rho_{a|x}\}_{a,x}$ is  \emph{unsteerable} if there exists a local hidden state (LHS) model~\cite{WiseJD07, JoneWD07, QuinVB14, UolaMG14, UolaBGP15, ZhuHC15}:
\begin{equation}
\rho_{a|x}=\sum_\lambda p_\rho(a|x,\lambda)\sigma_\lambda,
\end{equation}
 where $p_\rho(a|x,\lambda)\geq0$, $\sum_a p_\rho(a|x,\lambda)=1$, and $\sigma_\lambda$ are a collection of subnormalized states that sum up to $\rho_\mathrm{B}$ and thus form an ensemble for $\rho_\mathrm{B}$.  This model means that Bob can interpret his conditional states $\rho_{a|x}$  as coming from the preexisting states $\sigma_{\lambda},$ where only the probabilities are changed due  to the knowledge of Alice's measurements and outcomes.

The steering problem is closely related to the joint-measurement problem. A measurement assemblage $\{A_{a|x}\}_{a,x}$ is \emph{compatible} or \emph{jointly measurable} \cite{HeinW10,
QuinVB14, UolaMG14, UolaBGP15, Zhu15IC} if there exist a POVM $\{G_\lambda\}$ and probabilities $p_\mathrm{A}(a|x,\lambda)$ with $\sum_a p_\mathrm{A}(a|x,\lambda)=1$ such that
\begin{equation}
A_{a|x}=\sum_\lambda p_\mathrm{A}(a|x,\lambda)G_\lambda.
\end{equation}
Physically, this means that all the measurements in the assemblage can be measured jointly by performing  the measurement $\{G_\lambda\}_\lambda$ and  post processing  the measurement data. According to the above discussion, determining the compatibility of a measurement assemblage
is mathematically equivalent to
determining the unsteerability of a state assemblage.
Therefore, many compatibility problems can be translated into steering  problems, and vice versa \cite{QuinVB14, UolaMG14, UolaBGP15, ZhuHC15}.
This observation
will play an important role in the present study.

When $\rho_\mathrm{B}$ is of full rank, the state assemblage $\{\rho_{a|x}\}_{a,x}$ for Bob can be turned into a measurement assemblage as follows \cite{UolaBGP15, ZhuHC15},
\begin{equation}
B_{a|x}=\rho_\mathrm{B}^{-1/2}\rho_{a|x}\rho_\mathrm{B}^{-1/2}.
\end{equation}
Note that the set of operators $\{B_{a|x}\}_a$ for a given $x$ forms a POVM, which is referred to as Bob's \emph{steering-equivalent  observable} (or POVM)~\cite{UolaBGP15}.
The measurement assemblage $\{B_{a|x}\}_{a,x}$ is compatible iff the state assemblage $\{\rho_{a|x}\}_{a,x}$ is unsteerable. For example,  if $\rho_{a|x}=\sum_\lambda p(a|x,\lambda)\sigma_\lambda$, then $B_{a|x}=\sum_\lambda p(a|x,\lambda)G_\lambda$ with $G_\lambda=\rho_\mathrm{B}^{-1/2}\sigma_\lambda\rho_\mathrm{B}^{-1/2}$; the converse  follows from the same reasoning. This observation suggests a fruitful approach for understanding steering  via  steering-equivalent observables.

\section*{Results}

\subsection*{Steer Bell-diagonal states by projective measurements}

Any   two-qubit state can be  written in the following form
 \begin{equation}
 \label{eq:qub}
\rho=\frac{1}{4}(I\otimes I+\vec{a}\cdot\vec{\sigma}\otimes I+I\otimes\vec{b}\cdot\vec{\sigma}+\sum_{i,j=1}^3t_{ij}\sigma_i\otimes\sigma_j),
\end{equation}
where  $\sigma_j$ for $j=1,2,3$  are three Pauli
matrices, $\vec{\sigma}$ is the vector composed of these Pauli matrices, $\vec{a}$ and $\vec{b}$ are the Bloch vectors associated with the  reduced states of Alice and Bob, respectively, and $T=(t_{ij})$  is the correlation  matrix. The   two-qubit  state is a Bell-diagonal state iff $\vec{a}=\vec{b}=\vec{0}$ \cite{HoroH96}, in which case we have
\begin{equation}\label{eq:BellDiagonal}
\rho=\frac{1}{4}(I\otimes I+\sum_{i,j=1}^3t_{ij}\sigma_i\otimes
\sigma_j),
\end{equation}
with two completely mixed marginals, that is,
$\rho_\mathrm{A}=\rho_\mathrm{B}=I/2$. Bell-diagonal states are of special interest because they have a simple structure  and are thus a good starting point for understanding states with more complex structure. In addition,  all two-qubit states except for a set of measure zero can be turned into  Bell-diagonal
states by invertible SLOCC \cite{VersDM01}.

With a suitable local unitary transformation, the correlation matrix $T$     in \eqref{eq:BellDiagonal}  can be turned into diagonal form, so that
\begin{equation}\label{eq:BellDiagDiag}
\rho=\frac{1}{4}(I\otimes I+\sum_{j=1}^3t_j\sigma_j\otimes
\sigma_j).
\end{equation}
As an implication of this observation, a Bell-diagonal state is steerable by one party iff it is steerable by the other party, so there is no one-way steering \cite{BowlVQB14}
for Bell-diagonal states. It does not matter which party serves as the steering
party in the present study.

In the case of a qubit, any projective measurement $\{A_{\pm|x}\}$ with two outcomes $\pm$ is uniquely determined by a unit vector $\vec{e}_x$ on the Bloch sphere as  $A_{\pm|x}=(I\pm \vec{e}_x\cdot \vec{\sigma})/2$. If Alice and Bob share the Bell-diagonal state \eqref{eq:BellDiagonal} and Alice performs the projective measurement determined by $\vec{e}_x$, then the two outcomes will occur with the same probability of $1/2$, and the subnormalized reduced states of Bob are given by
$\rho_{\pm|x}=[I\pm(T^\rmT \vec{e}_x)\cdot \vec{\sigma}]/4$.
Accordingly, Bob's steering-equivalent observable takes on the form
\begin{equation}\label{eq:SEqubit}
B_{\pm|x}= \frac{1}{2}(I\pm\vec{r}_x\cdot \vec{\sigma}),\quad \vec{r}_x=T^\rmT \vec{e}_x.
\end{equation}
Note that this observable is uniquely characterized by the subnormalized vector $\vec{r}_x$, which determines an unbiased noisy (or unsharp) von Neumann observable. Here "unbiased" means that $\tr(B_{+|x})=\tr(B_{-|x})=1$. In this way, the correlation matrix $T$ induces a map from projective measurements of Alice to noisy projective measurements of Bob.
Alice can steer Bob's system using the measurement assemblage  $\{A_{\pm|x}\}_x$ iff the set of noisy projective measurements $\{B_{\pm|x}\}_x$ is  incompatible.

\begin{figure}[tbh]
\centering
\includegraphics[scale=0.1]{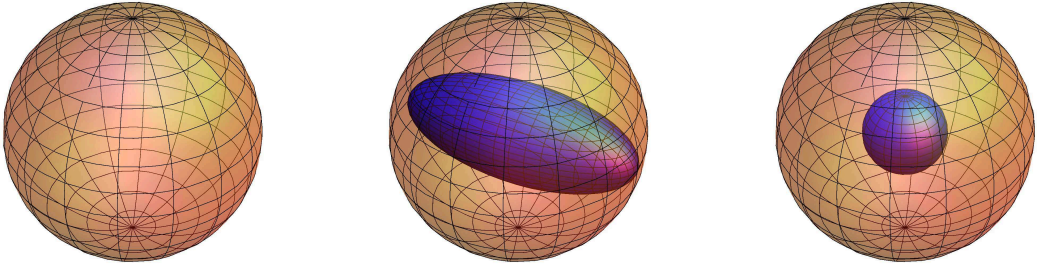}
\caption{The steering ellipsoids of three Bell-diagonal states. The ellipsoid of a Bell state (left) coincides with the Bloch sphere; the ellipsoid of a rank-2 Bell-diagonal state or an edge state (middle) is rotationally symmetric with the largest semiaxis equal to the radius of the Bloch sphere; the ellipsoid of a Werner state (right) is a sphere contained in the Bloch sphere.}
\label{fig:ellip}
\end{figure}

To see the geometric meaning of the map induced by $T$, note that the end point of  $\vec{r}_x$ lies on an ellipsoid $\mathcal{E}$ centered at origin and characterized by the symmetric matrix $T^\rmT T$: the three eigenvalues of $T^\rmT T$ are the squares of the three semiaxes (some of which may vanish), and the eigenvectors determine the orientations of these semiaxes; see Fig.~\ref{fig:ellip}. This ellipsoid encodes the set of potential noisy projective measurements of Bob induced by projective measurements of Alice. Geometrically, this ellipsoid is identical to the steering ellipsoid introduced in Refs.~\cite{Vers02,ShiJSD11,JevtPJR14}, which encodes the set of states to which Alice can steer Bob's system. It is also referred to as the steering ellipsoid
here although the meaning is slightly different from that in Ref.~\cite{Vers02,ShiJSD11,JevtPJR14}.  Since its discovery, the steering ellipsoid has played an important role in understanding various features pertinent to  entanglement and steering \cite{Vers02,ShiJSD11,JevtPJR14, MilnJJWR14,JevtHAZW15}.
To appreciate its significance in the current context, note that the steerability of a Bell-diagonal state by   the measurement assemblage $\{A_{\pm|x}\}_x$ is completely determined by the set of vectors $\vec{r}_x$ on the steering ellipsoid.
Moreover, in several cases of primary interest to us, the steerability can be determined by purely geometrical means, as we shall see shortly.

\subsection*{Steering by two projective measurements}

In this section we derive a necessary and sufficient criterion on the steerability of a Bell-diagonal state under two projective measurements.
We also introduce a steering measure  and illustrate its geometrical meaning. Our study shows that a Bell-diagonal state is steerable by two projective measurements iff it violates the CHSH inequality. Furthermore, we clarify the relations between entanglement, steering, and Bell nonlocality
by deriving tight inequalities  between the following three measures: the concurrence, the steering measure, and the volume of the steering ellipsoid.

\begin{theorem}\label{thm:SBD2} A  Bell-diagonal state with correlation matrix  $T$ is steerable by  two projective measurements iff $\lambda_1+\lambda_2>1$,
 where $\lambda_1, \lambda_2$ are the two largest eigenvalues of $T T^\rmT$.
\end{theorem}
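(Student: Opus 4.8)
The plan is to run the whole argument through the steering/joint-measurability dictionary set up above. By \eqref{eq:SEqubit}, a Bell-diagonal state with correlation matrix $T$ is steerable by the two projective measurements along unit vectors $\vec{e}_1,\vec{e}_2$ if and only if the pair of unbiased noisy qubit observables $B_{\pm|x}=\frac{1}{2}(I\pm\vec{r}_x\cdot\vec{\sigma})$ with $\vec{r}_x=T^\rmT\vec{e}_x$ is incompatible; hence the state is steerable by \emph{some} pair of projective measurements iff the maximal incompatibility attainable over all $\vec{e}_1,\vec{e}_2$ exceeds the joint-measurability threshold. The first ingredient I would invoke is the standard criterion for the joint measurability of two unbiased qubit observables: the pair with Bloch vectors $\vec{r}_1,\vec{r}_2$ is jointly measurable iff $|\vec{r}_1+\vec{r}_2|+|\vec{r}_1-\vec{r}_2|\le 2$. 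The theorem therefore reduces to evaluating
\[
\Omega(T):=\max_{|\vec{e}_1|=|\vec{e}_2|=1}\bigl(\,|T^\rmT(\vec{e}_1+\vec{e}_2)|+|T^\rmT(\vec{e}_1-\vec{e}_2)|\,\bigr)
\]
and showing that $\Omega(T)>2$ precisely when $\lambda_1+\lambda_2>1$.

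The core step is the computation of $\Omega(T)$. I would substitute $\vec{u}=\vec{e}_1+\vec{e}_2$, $\vec{v}=\vec{e}_1-\vec{e}_2$ and note that, as $\vec{e}_1,\vec{e}_2$ range over unit vectors, the pair $(\vec{u},\vec{v})$ ranges exactly over orthogonal pairs with $|\vec{u}|^2+|\vec{v}|^2=4$, the inverse map being $\vec{e}_{1,2}=(\vec{u}\pm\vec{v})/2$. Writing $\vec{u}=2\cos\theta\,\hat{u}$, $\vec{v}=2\sin\theta\,\hat{v}$ with $\hat{u}\perp\hat{v}$ unit vectors, the objective becomes $2\cos\theta\,|T^\rmT\hat{u}|+2\sin\theta\,|T^\rmT\hat{v}|$, whose maximum over $\theta$ is $2\sqrt{|T^\rmT\hat{u}|^2+|T^\rmT\hat{v}|^2}=2\sqrt{\hat{u}^\rmT TT^\rmT\hat{u}+\hat{v}^\rmT TT^\rmT\hat{v}}$. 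Maximizing the quantity under the square root over orthonormal pairs $\hat{u},\hat{v}$ is the Ky Fan (Courant--Fischer) extremal problem whose value is the sum of the two largest eigenvalues of $TT^\rmT$, i.e.\ $\lambda_1+\lambda_2$. Hence $\Omega(T)=2\sqrt{\lambda_1+\lambda_2}$; since the maximum is attained (continuity on a compact set), the state is steerable by two projective measurements iff $2\sqrt{\lambda_1+\lambda_2}>2$, that is, iff $\lambda_1+\lambda_2>1$.

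I expect the main obstacle to be not any single computation but making the reduction watertight: one must apply the joint-measurability criterion in its correct unbiased form to the operators $B_{\pm|x}$ induced by $T$, and handle uniformly the degenerate cases in which $T$ is singular and some $\vec{r}_x$ collapse toward the center of the Bloch ball (these cause no trouble, since a vanishing singular value of $T$ simply does not contribute to the top two eigenvalues of $TT^\rmT$). A secondary point that deserves to be stated explicitly is the bijection between unit pairs $(\vec{e}_1,\vec{e}_2)$ and orthogonal pairs $(\vec{u},\vec{v})$ of fixed norm-sum, which is what makes the two stages of the optimization — over the imbalance $\theta$ and over the orthonormal frame $(\hat{u},\hat{v})$ — decouple and jointly realize $\Omega(T)$. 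Finally, I would remark that $2\sqrt{\lambda_1+\lambda_2}$ is exactly the maximal CHSH value of the state by the Horodecki criterion, which delivers the announced equivalence between two-measurement steerability and CHSH violation and suggests taking $\Omega(T)$ (suitably normalized) as the steering measure.
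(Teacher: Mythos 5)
Your proposal is correct and follows essentially the same route as the paper: reduce to the steering-equivalent unbiased noisy qubit observables, apply the Busch joint-measurability criterion $|\vec{r}_1+\vec{r}_2|+|\vec{r}_1-\vec{r}_2|\leq 2$, and then maximize via the sum/difference substitution and the Ky Fan extremal property of $TT^\rmT$, exactly as in Eq.~\eqref{eq:Max}. Your explicit justification of the bijection between unit-vector pairs and orthogonal pairs with $|\vec{u}|^2+|\vec{v}|^2=4$ is a nice added touch but does not change the argument.
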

\begin{proof}
Suppose Alice  performs two projective measurements $\{A_{\pm|x}\}_{x=1}^2=\{(I\pm\vec{e}_x\cdot\vec{\sigma})/2\}_{x=1}^2$. Then  Bob's steering equivalent observables are given by  $\{B_{\pm|x}\}_{x=1}^2= \{(I\pm\vec{r}_x\cdot \vec{\sigma})/2\}_{x=1}^2$, where  $\vec{r}_x=T^\rmT\vec{e}_x$, as specified in~\eqref{eq:SEqubit}. According to Ref.~\cite{Busc86} (see also Refs.~\cite{StanRH08,BuscS10,
YuLLO10, Zhu15IC}), the two observables are compatible iff
\begin{equation}\label{eq:compatibility}
 |\vec{r}_1+\vec{r}_2|+|\vec{r}_1-\vec{r}_2|\leq2.
\end{equation}
Note that $\vec{r}_1$ and $\vec{r}_2$ are two vectors on the steering ellipsoid, and the left hand side of the inequality is half of the perimeter of a parallelogram  inscribed on the steering ellipsoid, with the plane spanned by the parallelogram passing the centre of the ellipsoid. So the Bell-diagonal state is steerable iff the maximal perimeter of such  parallelograms is larger than 4. Interestingly,  the maximum can be derived with a similar method used for deriving the maximal violation of the CHSH inequality  \cite{HoroHH95, Scar12},
\begin{align}
&\max_{\vec{e_1},\vec{e_2}}\{|\vec{r}_1+\vec{r}_2|+|\vec{r}_1-\vec{r}_2|\}=\max_{\vec{e_1},\vec{e_2}}\{|T^\rmT(\vec{e_1}+\vec{e_2})|+|T^\rmT(\vec{e_1}-\vec{e_2})|\}=\max_{\chi,\vec{c},\vec{c}^\bot}\{2\cos{\chi}|T^\rmT\vec{c}|+2\sin{\chi}|T^\rmT\vec{c}^\bot|\}\nonumber\\
&=2\max_{\vec{c},\vec{c}^\bot}\sqrt{|T^\rmT\vec{c}|^2+|T^\rmT\vec{c}^\bot|^2}=2\max_{\vec{c},\vec{c}^\bot}\sqrt{\vec{c}^\rmT T T^\rmT \vec{c}+\vec{c}^{\bot \rmT}T T^\rmT \vec{c}^\bot}=2\sqrt{\lambda_1+\lambda_2},\label{eq:Max}
\end{align}
where $2\chi$ is the angle spanned by $\vec{e}_1$ and $\vec{e}_2$; $\vec{c}$ and $\vec{c}^\bot$ are the direction vectors of $(\vec{e_1}+\vec{e_2})$ and  $(\vec{e_1}-\vec{e_2})$, respectively, which are always orthogonal. Here the maximum in the last step is attained  when  $\vec{c}$ and $\vec{c}^\bot$
span the same space as that spanned by the two eigenvectors associated with the two largest eigenvalues of $T T^\rmT$. The maximum over $\vec{e}_1$ and $\vec{e}_2$ can be attained when the two vectors are  eigenvectors corresponding to the two largest  eigenvalues of $TT^\rmT$.
The Bell-diagonal state is steerable by two  projective measurements iff $2\sqrt{\lambda_1+\lambda_2}>2$, that is, $\lambda_1+\lambda_2>1$.
\end{proof}

The choices of $\vec{c}$ and $\vec{c}^\bot$ that maximize \eqref{eq:Max} are highly not unique. Therefore, the optimal projective measurements that Alice needs to perform are also not unique. Although the optimal measurements can always be chosen to be mutually unbiased as shown in the above proof,  it is usually not necessary to do so. As an example, consider the Bell-diagonal state characterized by the correlation matrix $T=\diag(t_1,t_2, t_3)$ with $t_1\geq t_2\geq |t_3|$. One choice of $\vec{c}$ and $\vec{c}^\bot$ reads $\vec{c}=(1, 0, 0)$ and  $\vec{c}^\bot=(0,1, 0)$, which leads to the  optimal measurement directions  $\vec e_1=(t_1,t_2, 0)/\sqrt{t_1^2+t_2^2}$ and $\vec e_2=(t_1,-t_2, 0)/\sqrt{t_1^2+t_2^2}$. Note that  $\vec e_1$ and  $\vec e_2$ are  not orthogonal in general, so the corresponding projective measurements are not mutually unbiased.

The proof of Theorem~\ref{thm:SBD2} also suggests  a steering measure  of a Bell-diagonal state  under two projective
measurements,  namely,   $S:=2\sqrt{\lambda_1+\lambda_2}$. This measure has a simple geometrical meaning: $(S/2)^2$ is equal to the sum of squares of the two largest semiaxes of the steering ellipsoid. A Bell-diagonal state is steerable in this scenario iff $S>2$. The maximum $2\sqrt{2}$ of $S$ is attained when $\lambda_1=\lambda_2=1$, which corresponds to a Bell state. To obtain a normalized measure of steering, we may opt for $\max\{0,(S-2)/(2\sqrt{2}-2)\}$. According to Ref.~\cite{HoroHH95}, the maximal violation of the CHSH inequality by the Bell-diagonal state is  equal to $2\sqrt{\lambda_1+\lambda_2}$ (cf.~Ref.~\cite{MilnJJR14} for a geometrical interpretation), which coincides with the steering measure  $S$ introduced here.
This observation has an important implication for the relation between steering and Bell nonlocality.
\begin{corollary}
A Bell-diagonal state is steerable by two projective measurements iff it violates the CHSH inequality.
\end{corollary}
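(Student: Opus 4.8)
The plan is to obtain the corollary as an immediate consequence of Theorem~\ref{thm:SBD2} combined with the Horodecki--Horodecki--Horodecki criterion for the CHSH inequality. First I would recall from Ref.~\cite{HoroHH95} that a two-qubit state with correlation matrix $T$ admits a violation of the CHSH inequality---maximized over all choices of local measurement settings on both sides---if and only if $\lambda_1+\lambda_2>1$, where $\lambda_1\geq\lambda_2\geq\lambda_3$ are the eigenvalues of $TT^\rmT$; equivalently, the maximal CHSH value equals $2\sqrt{\lambda_1+\lambda_2}$. For a Bell-diagonal state this criterion is exact, since such a state is completely specified by $T$ (its Bloch vectors vanish).

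Next I would invoke Theorem~\ref{thm:SBD2}, according to which a Bell-diagonal state is steerable by two projective measurements if and only if $\lambda_1+\lambda_2>1$, with $\lambda_1,\lambda_2$ the two largest eigenvalues of $TT^\rmT$. The two conditions are literally the same inequality in the same quantity $\lambda_1+\lambda_2$, so the two properties are equivalent. Put differently, the steering measure $S=2\sqrt{\lambda_1+\lambda_2}$ identified in the discussion after Theorem~\ref{thm:SBD2} is exactly the Horodecki expression for the maximal CHSH value, and the steering threshold $S>2$ is precisely the CHSH-violation threshold; this is all that needs to be said.

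The only point deserving a little care---and it is not really an obstacle---is bookkeeping about conventions: the CHSH violation in Ref.~\cite{HoroHH95} is optimized over all (projective) measurement settings, while "steerable by two projective measurements" in Theorem~\ref{thm:SBD2} refers to the existence of some pair of projective measurements for Alice, and the proof of that theorem already exhibits the optimal pair attaining $S$. Since both statements reduce to comparing the same optimized quantity against the same threshold, the equivalence is automatic. It is also worth remarking, for context, that this coincidence is special to the Bell-diagonal family and to the two-setting scenario: in general, steerability is strictly weaker than CHSH violation, so the corollary records a genuine feature of this restricted setting rather than a general phenomenon, as already emphasized in the text.
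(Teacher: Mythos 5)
Your argument is correct and coincides with the paper's own reasoning: the paper likewise obtains the corollary by noting that the steering measure $S=2\sqrt{\lambda_1+\lambda_2}$ from Theorem~\ref{thm:SBD2} equals the Horodecki expression for the maximal CHSH violation from Ref.~\cite{HoroHH95}, so both properties reduce to the same condition $\lambda_1+\lambda_2>1$. Your added remark about the measurement-optimization conventions is a sensible piece of bookkeeping, but it does not change the route.
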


To clarify the geometric meaning of Theorem~\ref{thm:SBD2} and the steering measure $S$, it is convenient to choose a concrete Bell basis. Here we shall adopt the following choice \cite{LangC10}, \begin{equation}
|\beta_{\mu\nu}\rangle=\frac{1}{\sqrt{2}}(|0,\nu\rangle+(-1)^\mu|1,1\oplus
\nu\rangle),\quad \mu,\nu=0,1.
\end{equation}
Note that $|\beta_{11}\rangle$ is the singlet. Thanks to the choice of the Bell basis, the correlation matrices of the four Bell states are diagonal as given by  $\diag((-1)^\mu, -(-1)^{\mu+\nu}, (-1)^\nu)$.
Up to a local unitary transformation, any Bell-diagonal state is a mixture of the four Bell states. Without loss of generality, we can focus on Bell-diagonal states of this form, whose  correlation matrices are also
diagonal, as in \eqref{eq:BellDiagDiag}.

\begin{figure}[ht]
\centering
\includegraphics[scale=0.1]{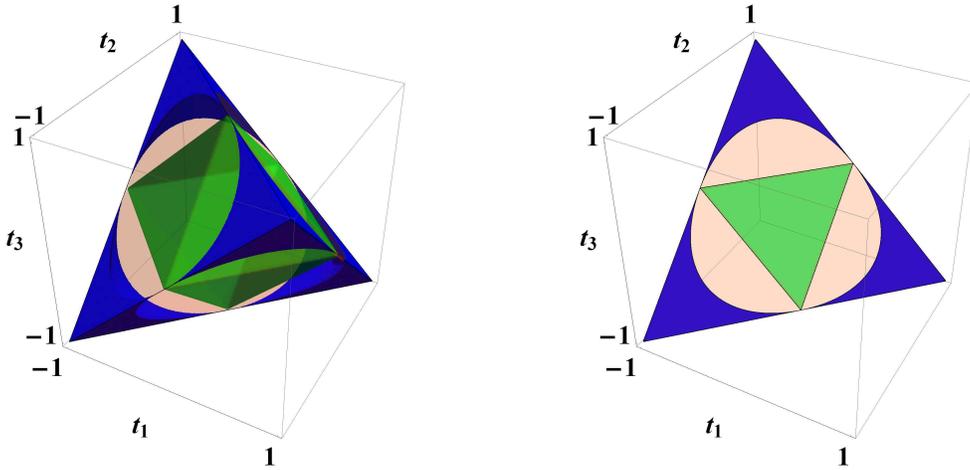}
\caption{\label{fig:SBDtwo} Geometric illustration of Bell-diagonal
states steerable by two  projective measurements.    (left) The regular tetrahedron  represents the set of  Bell-diagonal states.  The octahedron
in green represents the set of separable  states. The blue regions represent those states that  are  steerable  by two projective measurements. (right) A face of the regular tetrahedron which represents the set of rank-3 Bell-diagonal states.}
\end{figure}

Geometrically, the set of Bell-diagonal states forms a regular tetrahedron, whose vertices correspond to the four Bell states \cite{HoroH96,LangC10}.
The set of separable Bell-diagonal states forms an octahedron inside the tetrahedron   \cite{HoroH96,LangC10}.    The tetrahedron can be embedded into a cube whose sides are aligned with the three axes labelled by $t_1,
t_2, t_3$,  as shown in Fig.~\ref{fig:SBDtwo}. In this way, a Bell-diagonal state is uniquely specified by its three coordinates $(t_1, t_2, t_3)$. The
half steering measure $S/2$ of this Bell-diagonal state is equal to the maximum over $\sqrt{t_1^2+t_2^2}$, $\sqrt{t_2^2+t_3^2}$, and $\sqrt{t_3^2+t_1^2}$, which  is equal to the maximal length of  the three projections of $(t_1, t_2, t_3)$ onto the three coordinate planes. Note that $S$ is convex in $t_1, t_2, t_3$ and defines a norm in the three-dimensional vector space that accommodates Bell-diagonal states. Each level surface of this norm is determined by  three orthogonal cylinders of equal radius. In particular, the   set of unsteerable Bell-diagonal states (determined by the level surface with $S=2$) is contained in the intersection of the three solid cylinders  specified by the following three inequalities,
respectively,\begin{equation}
t_{1}^2+t_{2}^2\leq1,\quad
t_{2}^2+t_{3}^2\leq1,\quad
t_{3}^2+t_{1}^2\leq 1.
\end{equation}

\begin{figure*}[tbh]
\centering
\includegraphics[scale=0.144]{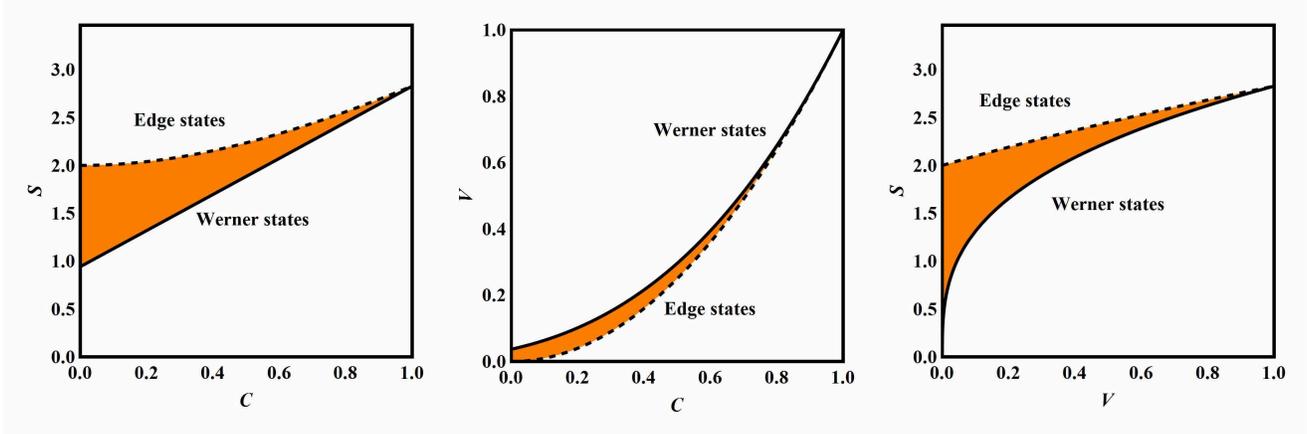}
\caption{\label{fig:SVC2}Relations between three entanglement and steering measures for Bell-diagonal states. Here $C$ is the concurrence, $S$ is the steering measure, and $V$ is the normalized volume of the steering ellipsoid.  The orange region in each plot indicates the  range of values. The dashed lines represent  edge states and the solid lines represent  Werner states.  }
\end{figure*}

In the rest of this section we clarify the relations between the following three measures: the concurrence, the steering measure $S, $ and the volume of the steering ellipsoid. Since $S$ is equal to the maximal violation of the CHSH inequality, our discussion is also of interest to studying Bell nonlocality.

Recall that a two-qubit state is entangled iff it has nonzero concurrence and that the concurrence of a Bell-diagonal state is given by  $C=\max\{0,2p_\mrm{max} -1\}$, where $p_\mrm{max}$ is the maximal eigenvalue of the state~\cite{HillW97}.   Given a Bell-diagonal state
with correlation matrix $T$, the normalized volume $V$ of the steering ellipsoid  is defined as $V:=|\det(T)|$ \cite{JevtPJR14}. If $T$ is diagonal, say
$T=\diag(t_1,t_2,t_3)$, then $V=|t_1t_2t_3|$. The constraints   $|t_j|\leq 1$ for $j=1,2,3$ imply that $0\leq V\leq 1$, where  the upper bound is saturated only for Bell states.

Calculation shows that $C,S,V$ satisfy the following inequalities (see  Methods section for more details):
\begin{align}
\frac{2\sqrt{2}}{3}(1+2C)&\leq S\leq 2\sqrt{1+C^2}, \label{eq:SC}\\
C^2&\leq V\leq \Bigl(\frac{1+2C}{3}\Bigr)^3, \label{eq:VC}\\
2\sqrt{2}\sqrt[3]{V}&\leq S\leq 2\sqrt{1+V}.\label{eq:SV}
\end{align}
Here the lower bound in \eqref{eq:SC} is applicable to entangled Bell-diagonal
states, while the other five bounds in \eqref{eq:SC}, \eqref{eq:VC}, and \eqref{eq:SV}
are applicable to all Bell-diagonal states. The inequality $C^2\leq V$ was also derived in Ref.~\cite{MilnJJWR14}.
As an implication of the above inequalities, any Bell-diagonal state with concurrence
larger than $(3-\sqrt{2})/(2\sqrt{2})$ is steerable by two projective measurements.
The normalized volume of the steering
ellipsoid of any separable Bell-diagonal state is bounded from  above by $1/27$, in agreement
with the result in Ref.~\cite{JevtPJR14},  while that
 of any unsteerable  Bell-diagonal
state is bounded from above by $1/(2\sqrt{2})$.

Two types of Bell-diagonal
states deserve special attention as they saturate certain inequalities in \eqref{eq:SC}, \eqref{eq:VC}, and \eqref{eq:SV}. A Werner state has the form
\begin{equation}\label{eq:Wern89}
W_f=f|\beta_{11}\rangle\langle\beta_{11} |
+\frac{1-f}{3}(I-|\beta_{11}\rangle\langle\beta_{11} |),
\end{equation}
where $0\leq f\leq 1$. Note that $f$ is equal to the singlet fraction when $f\geq1/4$. Geometrically, the Werner state  lies on a diagonal of the cube in Fig.~\ref{fig:SBDtwo};
conversely, any Bell-diagonal state lying on a diagonal of the cube is equivalent
to a Werner state under a local unitary transformation.
The correlation matrix for the Werner state has the form $T=\diag(t_1,t_2,t_3)$ with $t_1=t_2=t_3=(1-4f)/3$. Therefore,
the steering ellipsoid reduces to a sphere with radius $t_1=t_2=t_3=|4f-1|/3$; see the right plot in Fig.~\ref{fig:ellip}. In addition,
\begin{equation}
C=\max\{0,2f-1\},\quad S=\frac{2\sqrt{2}}{3}|4f-1|,\quad V=\frac{|4f-1|^3}{27}.
\end{equation}
The Werner state is steerable by two projective measurements iff $(3\sqrt{2}+2)/8<f\leq
1$.   It saturates the lower bound in \eqref{eq:SV} and, when  $f\geq\frac{1}{2}$, also the lower bound in \eqref{eq:SC} and the upper bound in \eqref{eq:VC}.

Those states
lying on an edge of the tetrahedron in Fig.~\ref{fig:SBDtwo} are called \emph{edge
states} (or rank-2
Bell-diagonal states). If an edge state has two nonzero  eigenvalues $p$
and $1-p$ with $p\geq1/2$, then $t_{11}^2=1$ and  $t_{22}^2=t_{33}^2=(2p-1)^2$ ( assuming
$t_1\geq t_2\geq |t_3|$). Therefore,
the steering ellipsoid is rotationally symmetric with the largest semiaxis equal to 1 and the other two semiaxes equal to $2p-1$; see the middle plot in Fig.~\ref{fig:ellip}. In addition, \begin{equation}
C=2p-1,\quad S=2\sqrt{1+(2p-1)^2},\quad V=(2p-1)^2.
\end{equation}
The edge state is steerable by two projective measurements whenever $p\neq 1/2$, that
is, when  it is entangled. So  entanglement is sufficient to guarantee  steering  and Bell nonlocality in this special case, which complements  Gisin's theorem \cite{Gisi91, ChenSXW15}.
In addition, the edge state saturates the  upper bounds in \eqref{eq:SC} and \eqref{eq:SV} as well as the lower bound  in \eqref{eq:VC}.

Fig.~\ref{fig:SVC2} illustrates the relations between $C, S, V$. When the concurrence $C$ is large, the three measures are closely correlated
to each other, while they tend to be more independent in the opposite scenario.
Quite surprisingly,  the normalized volume $V$ of the steering ellipsoid seems to have a closer relation with concurrence $C$ rather than the steering measure $S$. In addition, for given
concurrence $C>0$, the volume $V$  attains the maximum  when the steering measure $S$ attains the minimum, and vice
versa.

\subsection*{Steering by three projective measurements}

In this section we explore the steerability of  Bell-diagonal states under three projective measurements by the steering party. To this end, we need a criterion for  determining the compatibility of three unbiased noisy projective measurements. Fortunately, this problem has been solved  in Refs.~\cite{PalG11, YuO13}, according to which, three noisy binary observables  $\{B_{\pm|x}\}_{x=1}^3=\{(I\pm\vec{r}_{ x}\cdot\vec{\sigma})/2\}_{x=1}^3$ are compatible  iff
\begin{equation}
\label{eq:Compatibility3}
\sum_{x=0}^3|\vec{\Lambda}_x-\vec{\Lambda}_{\mathrm{FT}}|\leq4.
\end{equation}
Here $\vec{\Lambda}_0=\vec{r}_1+\vec{r}_2+\vec{r}_3$, $\vec{\Lambda}_x=2\vec{r}_x-\bm{\Lambda}_0$ for $x=1,2,3$,
and  $\vec{\Lambda}_{\mathrm{FT}}$ denotes the Fermat-Toricelli (FT) vector of $\{\vec{\Lambda}_x\}_{x=0}^3$, which is  the vector $\vec{\Lambda}$ that  minimizes the total distance  $\sum_{x=0}^3|\vec{\Lambda_x}-\vec{\Lambda}|$. In general,  $\bm{\Lambda}_{\mathrm{FT}}$ has no analytical expression~\cite{PalG11,YuO13}.

Given a Bell-diagonal state with correlation matrix $T$,
suppose Alice  performs three projective measurements $\{A_{\pm|x}\}_{x=1}^3=\{(I\pm\vec{e}_x\cdot\vec{\sigma})/2\}_{x=1}^3$. Then  Bob's steering equivalent observables are given by  $\{B_{\pm|x}\}_{x=1}^3= \{(I\pm\vec{r}_x\cdot \vec{\sigma})/2\}_{x=1}^3$, where  $\vec{r}_x=T^\rmT\vec{e}_x$ for $x=1,2,3$.
Define
\begin{equation}
\label{eq:S3}
S_3=\frac{1}{2}\max_{\vec{r}_1,\vec{r}_2,\vec{r}_3\in \mathcal{E}}\sum_{x=0}^3|\vec{\Lambda}_x-\vec{\Lambda}_{\mathrm{FT}}|
\end{equation}
 as a steering measure of  the Bell-diagonal state under three projective measurements,  where $\mathcal{E}$ is the steering ellipsoid. Then the Bell-diagonal state is steerable by three projective measurements iff $S_3>2$. In general, it is not easy to compute $S_3$. Here we shall derive a nontrivial lower bound, which is very useful for understanding the steerability of Bell-diagonal states by three projective measurements.

When  $\vec{r}_3\bot\vec{r}_{1,2}$,  the FT vector can be determined explicitly~\cite{YuO13} (note that there is a typo in Ref.~\cite{YuO13} about the sign),
\begin{equation}
\bm{\Lambda}_{\mathrm{FT}}=\frac{|\vec{r}_1-\vec{r}_2|-|\vec{r}_1+\vec{r}_2|}
{|\vec{r}_1-\vec{r}_2|+|\vec{r}_1+\vec{r}_2|}\vec{r}_3,
\end{equation}
which imply that
\begin{equation}
\sum_{x=0}^3|\vec{\Lambda}_x-\vec{\Lambda}_{\mathrm{FT}}|
=2\sqrt{(|\vec{r}_1-\vec{r}_2|+|\vec{r}_1+\vec{r}_2|)^2+4\vec{r}_3^2}.
\end{equation}

\begin{figure}[tbh]
\centering
\includegraphics[scale=0.1]{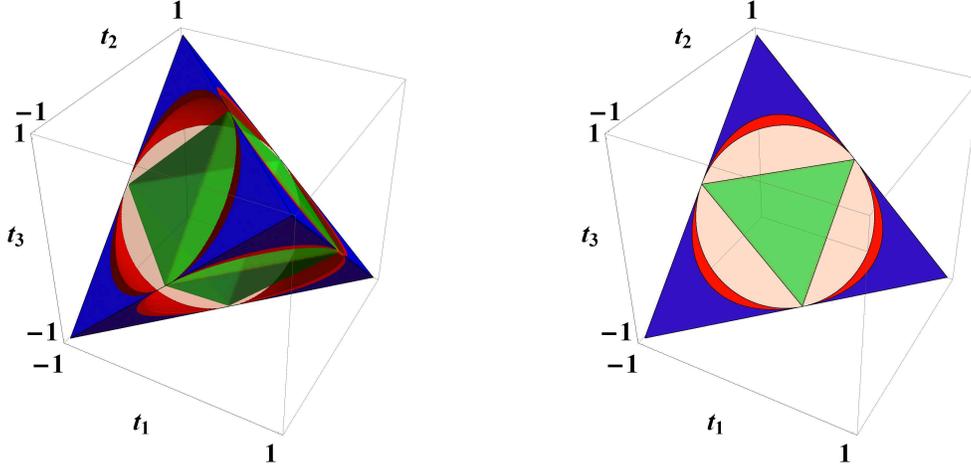}
\caption{\label{fig:SBD3} Illustration  of Bell-diagonal states steerable by three projective measurements (cf.~Fig.~\ref{fig:SBDtwo}).
(left) The regular tetrahedron
 represents the set of   Bell-diagonal states.  The octahedron
in green represents the set of separable  states. The blue regions represent
those states that  are  steerable  by two projective measurements, and the red regions represent those states that are not steerable by two projective measurements but steerable by three projective measurements as specified in the proof of Theorem~\ref{thm:SBD3}. (right)
A face of the regular tetrahedron  which represents
the set of rank-3 Bell-diagonal states.}
\end{figure}

\begin{theorem}\label{thm:SBD3} Any Bell-diagonal  state with  $\|T\|_{\mrm{F}}>
1$ is steerable by three projective measurements, where $\|T\|_{\mrm{F}}=\sqrt{\tr(T T^\rmT)}=\sqrt{\tr(T^\rmT T)}$ is the Frobenius  norm of the correlation matrix $T$.
\end{theorem}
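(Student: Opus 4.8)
The plan is to establish the quantitative strengthening $S_3\ge 2\|T\|_{\mrm F}$, where $S_3$ is the three-measurement steering measure defined in \eqref{eq:S3}; since a Bell-diagonal state is steerable by three projective measurements iff $S_3>2$, the hypothesis $\|T\|_{\mrm F}>1$ then gives the result at once. To bound $S_3$ from below it is enough to feed one convenient family of measurements into the maximization, and the natural choice is three mutually orthogonal (equivalently, mutually unbiased) projective measurements: this is precisely the situation in which the Fermat--Toricelli vector, and hence the left-hand side of the compatibility criterion \eqref{eq:Compatibility3}, is available in closed form, as recalled immediately before the theorem.

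Concretely, I would diagonalize $T=\diag(t_1,t_2,t_3)$ with $t_1^2\ge t_2^2\ge t_3^2$ and set $\lambda_j=t_j^2$, so that $\lambda_1,\lambda_2$ are the two largest eigenvalues of $TT^\rmT$ and $\|T\|_{\mrm F}^2=\lambda_1+\lambda_2+\lambda_3$. Take Alice's directions to be the coordinate axes, $\vec e_1=(1,0,0)$, $\vec e_2=(0,1,0)$, $\vec e_3=(0,0,1)$, so that by \eqref{eq:SEqubit} the induced Bloch vectors are $\vec r_1=(t_1,0,0)$, $\vec r_2=(0,t_2,0)$, $\vec r_3=(0,0,t_3)$; in particular $\vec r_3\bot\vec r_1$ and $\vec r_3\bot\vec r_2$, so the explicit formula for $\sum_{x=0}^3|\vec\Lambda_x-\vec\Lambda_{\mathrm{FT}}|$ applies. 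Since $|\vec r_1+\vec r_2|=|\vec r_1-\vec r_2|=\sqrt{t_1^2+t_2^2}$ and $|\vec r_3|^2=t_3^2$, that formula gives
\begin{equation}
\sum_{x=0}^3|\vec\Lambda_x-\vec\Lambda_{\mathrm{FT}}|=2\sqrt{4(t_1^2+t_2^2)+4t_3^2}=4\|T\|_{\mrm F}
\end{equation}
(equivalently: by the obvious $\bbZ_2\times\bbZ_2$ symmetry of the four points $\vec\Lambda_x$ the FT vector is the origin and each $|\vec\Lambda_x|=\|T\|_{\mrm F}$), hence $S_3\ge\tfrac12\cdot 4\|T\|_{\mrm F}=2\|T\|_{\mrm F}>2$. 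I would then record the geometric reading: the criterion says exactly that the coordinate point $(t_1,t_2,t_3)$ of the state lies strictly outside the unit ball, which is what generates the red regions in Fig.~\ref{fig:SBD3}.

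The step I expect to be the conceptual crux — though it is computationally trivial — is realizing that orthogonal measurements are the right family to plug in, since they are exactly what makes $\vec\Lambda_{\mathrm{FT}}$ collapse to the scalar multiple of $\vec r_3$ quoted before the theorem and thereby make the criterion \eqref{eq:Compatibility3} explicitly evaluable. A few checks I would include: $t_1^2+t_2^2>0$ under the hypothesis (so the FT formula's denominator does not vanish), the bound is unchanged if a different coordinate axis is assigned to $\vec e_3$, and it reduces to the two-measurement criterion $\lambda_1+\lambda_2>1$ of Theorem~\ref{thm:SBD2} when $t_3=0$, as it must. Finally, I would emphasize that only a \emph{sufficient} criterion is being claimed: computing $S_3$ exactly, hence a necessary-and-sufficient three-measurement criterion, is the genuinely hard problem, because for generic non-orthogonal $\vec r_1,\vec r_2,\vec r_3$ the vector $\vec\Lambda_{\mathrm{FT}}$ has no closed-form expression.
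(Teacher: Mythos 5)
Your proposal is correct and follows essentially the same route as the paper: choose the three measurement directions along the eigenvectors of $TT^\rmT$ (the coordinate axes in the frame where $T$ is diagonal), so that the induced vectors $\vec{r}_x$ are mutually orthogonal, then evaluate the compatibility criterion via the closed-form Fermat--Toricelli expression for $\vec{r}_3\perp\vec{r}_{1,2}$ to obtain $S_3\geq 2\|T\|_{\mrm{F}}$. Your symmetry remark about the FT point being the origin and your consistency checks are nice additions, but the core argument is the paper's.
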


\begin{proof}
Let $\lambda_1, \lambda_2, \lambda_3$ be the eigenvalues  of $TT^\rmT$ in nonincreasing order and $\vec{e}_1, \vec{e}_2, \vec{e}_3$  the associated orthonormal eigenvectors.   Let  $\vec{r}_x=T^\rmT\vec{e}_x$ for $x=1,2,3$. Then $\vec{r}_1,\vec{r}_2, \vec{r}_3$ are mutually orthogonal and
\begin{equation}
S_3\geq \sqrt{(|\vec{r}_1-\vec{r}_2|+|\vec{r}_1+\vec{r}_2|)^2+4\vec{r}_3^2}=2\sqrt{\lambda_1+\lambda_2+\lambda_3}=2\|T\|_{\mrm{F}}.
\end{equation}
If the Bell-diagonal state is not steerable by three projective measurements, then $S_3\leq 2$, so $\|T\|_{\mrm{F}}\leq1$.
\end{proof}

The Frobenius norm $\|T\|_{\mrm{F}}$ happens to be  the Euclidean  norm of the vector $(t_1,t_2,t_3)$ that represents the Bell-diagonal state in Figs.~\ref{fig:SBDtwo} and \ref{fig:SBD3}; its square is equal to the sum of squares of the three semiaxes of the steering ellipsoid (cf.~Ref.~\cite{CostA15}). The set of Bell-diagonal states with the same norm $\|T\|_{\mrm{F}}$ lies on a  sphere. It is clear from the above discussion that $S_3\geq 2\|T\|_{\mrm{F}}\geq  S$, so any Bell-diagonal state that is steerable by two
projective measurements is also steerable by three
projective measurements, as expected. The converse is not true in general, as illustrated in Fig.~\ref{fig:SBD3}. Consider the Werner state in \eqref{eq:Wern89} for example, we have $\|T\|_{\mrm{F}}=|4f-1|/\sqrt{3}$, so the Werner state is steerable
by three projective measurements if $1\geq f>(\sqrt{3}+1)/4$. By contrast, it is steerable by two projective measurements only if $1\geq f>(3\sqrt{2}+2)/8$.

\begin{figure*}[bth]
\centering
\includegraphics[scale=0.15]{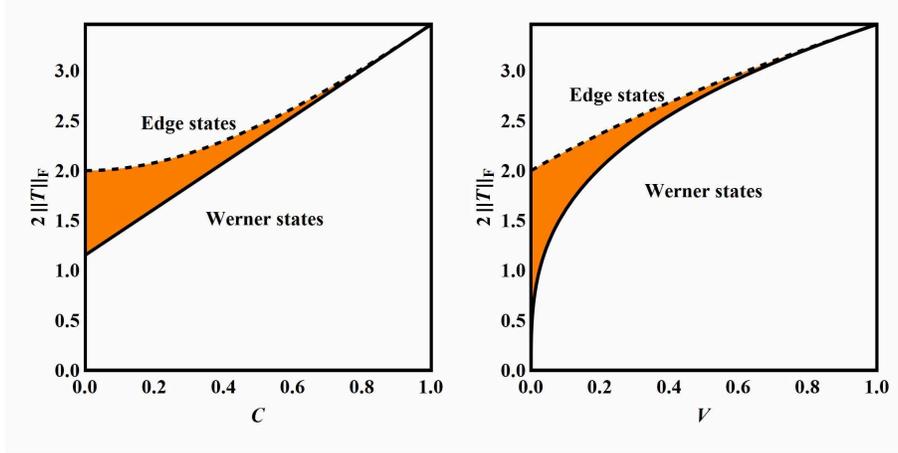}
\caption{\label{fig:SVC3}Ranges of values of $2\|T\|_{\mrm{F}}$ versus the concurrence $C$ (left) and the normalized volume $V$ of the steering ellipsoid  (right) for Bell-diagonal states. Here $\|T\|_{\mrm{F}}$ is the Frobenius norm of the correlation matrix $T$, and  $2\|T\|_{\mrm{F}}$ is a lower bound for the steering measure $S_3$, which determines the steerability of Bell-diagonal states under three projective measurements. The dashed lines represent  edge states and the solid lines represent  Werner states. }
\end{figure*}

The relations between $\|T\|_{\mrm{F}}$ and $C,V$ can be derived with similar methods
used in deriving \eqref{eq:SC} and \eqref{eq:SV},
with the results\begin{align}
\frac{1}{\sqrt{3}}(1+2C&)\leq \|T\|_{\mrm{F}}\leq\sqrt{1+2C^2},\label{eq:SC3}\\
\sqrt{3}V^{1/3}&\leq \|T\|_{\mrm{F}}\leq\sqrt{1+2V}.\label{eq:SV3}
\end{align}
Here the lower bound in \eqref{eq:SC3} is applicable to entangled
Bell-diagonal states, while the other  three bounds are applicable to all Bell-diagonal states. As in \eqref{eq:SC} and \eqref{eq:SV}, the two lower bounds are saturated   by Werner states, while the two upper bounds are saturated by edge states;
see  Fig.~\ref{fig:SVC3}. These inequalities are quite instructive to understanding the steering of Bell-diagonal states by three projective measurements given that $S_3\geq2\|T\|_{\mrm{F}}$. As an implication, any unsteerable Bell-diagonal state satisfies   $C\leq (\sqrt{3}-1)/2$ and $V\leq1/(3\sqrt{3})$.

\section*{Discussion}

In summary, we studied systematically the steerability of Bell-diagonal states by projective measurements on the steering party. In the simplest nontrivial scenario of two projective measurements, we solved the problem completely by  deriving a necessary and sufficient criterion, which has a simple geometrical interpretation. We also introduced a steering measure  and proved that it is equal to the maximal violation
of the CHSH  inequality. This conclusion implies that  a Bell-diagonal state is steerable by two projective measurements
iff it violates the CHSH inequality. In the special case of edge states, our study  shows that entanglement is sufficient to guarantee  steering and Bell nonlocality. In addition, we clarified
the relations between entanglement and steering by
deriving tight inequalities  satisfied by the concurrence, our steering measure, and the volume of the steering ellipsoid.
Finally, we explored the steerability of Bell-diagonal states under three
projective measurements. A simple  sufficient criterion was derived, which  can detect the
steerability of many states that are not steerable by two projective measurements.

Our study provided a number of instructive analytical results on steering, which are quite rare in the literature.   These results not only furnish a simple geometric picture about steering
of Bell-diagonal states, but  also offer valuable insight on the relations between entanglement, steering,
and Bell nonlocality. They may  serve as a starting point for exploring more complicated steering scenarios. In addition, our work  prompts several interesting questions, which deserve further studies. For example, is the steering criterion in Theorem~\ref{thm:SBD3}  both necessary and sufficient? Is there any upper bound on the number of measurements that are sufficient to  induce  steering for all steerable Bell-diagonal states?  We hope that these questions will stimulate further progress on the study of  steering.

\section*{Methods}

\subsection*{Concurrence and steering measure}

Here we derive the inequalities in \eqref{eq:SC}, \eqref{eq:VC}, and \eqref{eq:SV} in the main text, which characterize the relations between the concurrence~$C$, the steering measure $S$ (under two projective measurements), and the volume $V$ of the steering ellipsoid. We also determine those Bell-diagonal states that saturate these inequalities. Similar approach can  be applied to derive \eqref{eq:SC3} and \eqref{eq:SV3}, which are pertinent to steering of Bell-diagonal states by three projective measurements.

Without loss of generality, we may assume that $\rho$ has the form in~\eqref{eq:BellDiagDiag}
with $|t_3|\leq t_2\leq t_1\leq1$. Then the spectrum of $\rho$ is
given by
\begin{equation}
\frac{1}{4}\bigl\{1-t_1-t_2-t_3, 1-t_1+t_2+t_3, 1+t_1-t_2+t_3, 1+t_1+t_2-t_3
\bigr\},
\end{equation}
where the eigenvalues are arranged in nondecreasing order. The minimal and
the maximal eigenvalues are respectively given by  $p_\mrm{min}=(1-t_1-t_2-t_3)/4\geq0$  and  $p_\mrm{max}=(1+t_1+t_2-t_3)/4$.

If the Bell-diagonal state is separable, that is $C=0$, then $0\leq p_\mrm{min}\leq p_\mrm{max}\leq 1/2$~\cite{HoroH96},
which implies that
\begin{equation}
t_1+t_2+|t_3|\leq 1, \quad t_1^2+t_2^2\leq1 ,\quad |t_1t_2t_3|\leq 1/27.
\end{equation}
So the inequalities $S\leq 2\sqrt{1+C^2}$, $V\leq (1+2C)^3/27$, and  $S\leq 2\sqrt{1+V}$ in  \eqref{eq:SC}, \eqref{eq:VC}, and \eqref{eq:SV}
hold for  separable Bell-diagonal states. The inequality $S\leq 2\sqrt{1+C^2}$ is saturated iff $t_1=1$, $t_2=t_3=0$, in which case $\rho$ is an edge state with two nonzero eigenvalues equal to $1/2$. The inequality $S\leq
2\sqrt{1+V}$ is saturated under the same condition. The inequality $V\leq (1+2C)^3/27$ is saturated iff $t_1=t_2=|t_3|=1/3$, in which case $\rho$ is a Werner state which either has singlet fraction $1/2$ or is proportional to a projector of rank 3.
Here states that are equivalent to $W_f$ in \eqref{eq:Wern89} under local unitary transformations are also called Werner states. The inequality $C^2\leq V$ in  \eqref{eq:VC} is trivial for separable states; it is saturated iff $V=C=0$, that is, $t_3=0$, in which case the Bell-diagonal state lies on a coordinate plane in Fig.~\ref{fig:SBDtwo}. The inequality $2\sqrt{2}\sqrt[3]{V}\leq S$ follows from the definitions of $S$ and $V$ and is applicable to both separable and entangled states.
It is saturated iff $t_1=t_2=|t_3|$, in which case $\rho$ is
a Werner state.

If the Bell-diagonal state is entangled, then $p_\mrm{max}>1/2$,
 $C=2p_\mrm{max}-1=(t_1+t_2-t_3-1)/2$,
and $t_3=t_1+t_2-1-2C$. The positivity of $\rho$ and the requirement $|t_3|\leq
t_2\leq t_1$ lead to the following set of inequalities,
\begin{equation}
t_2\leq t_1\quad t_1+t_2\leq 1+C,\quad
t_1+2t_2\geq 1+2C.
\end{equation}
These inequalities  determine a triangular region in the parameter space of $t_1, t_2$ with the  following three vertices:
\begin{equation}
(1,C), \quad \frac{1}{2}(1+C,1+C),\quad \frac{1}{3}(1+2C,1+2C).
 \end{equation}
The maximum $1+C^2$ of $t_1^2+t_2^2$ under these constraints is attained
iff  $t_1=1,t_2=-t_3=C$, in which case the state has two nonzero eigenvalues
equal to $(1\pm C)/2$  and  is thus an edge state.  The minimum $2(1+2C)^2/9$
is attained iff $t_1=t_2=-t_3=(1+2C)/3$, in which case the state has one
eigenvalue equal
to $(1+C)/2$ and three eigenvalues  equal to $(1-C)/6$,  and  is thus a Werner
state.
By contrast, the maximum  $(1+2C)^3/27$ of $|t_1t_2t_3|$ is attained exactly when $t_1^2+t_2^2$ attains the minimum, and the minimum $C^2$ of $|t_1t_2t_3|$ is attained
when $t_1^2+t_2^2$ attains the maximum. Therefore,  \eqref{eq:SC} and \eqref{eq:VC}
hold for entangled Bell-diagonal states. As an immediate corollary, \eqref{eq:SV} also holds in this case.

In summary, the lower bound in \eqref{eq:SC} is applicable to entangled Bell-diagonal states, while the other five bounds in \eqref{eq:SC}, \eqref{eq:VC}, \eqref{eq:SV} are applicable to all Bell-diagonal states.
The two inequalities $S\leq 2\sqrt{1+C^2}$ and  $S\leq
2\sqrt{1+V}$
are saturated only for edge states. The inequality  $C^2\leq V$ is  saturated only for edge states and those states with $V=0$.
The two inequalities $2\sqrt{2}(1+2C)/3\leq S$ and $V\leq (1+2C)^3/27$ are saturated only for Werner states that have singlet fractions  at least $1/2$ or Werner states that are proportional to  rank-3 projectors. The inequality $2\sqrt{2}\sqrt[3]{V}\leq S$ is saturated only for Werner states.
In particular, among entangled Bell-diagonal states, only edge states and Werner states with singlet fractions larger than $1/2$ can saturate these inequalities.

\section*{Acknowledgements}

H.Z. is grateful to Christopher Fuchs and Johan Aberg for suggestions on the title and to Antony Milne for comments. We  gratefully thank for  the  supports  by    NSFC (Grant Nos.
 11375141, 11425522, 11434013, 91536108, 11275131). H.Z.   acknowledges financial support
from the Excellence Initiative of the German Federal and State Governments
(ZUK81) and the DFG as well as Perimeter Institute for
Theoretical Physics. Research at Perimeter Institute is
supported by the Government of Canada through Industry Canada and by the Province of Ontario through the
Ministry of Research and Innovation.

\section*{Author contributions}

H.Z.  and  Q.Q. initiated the research project and established the main results, including Theorems~1, 2, and Corollary~1. H.F., S.M.F, S.Y.L. and W.L.Y. joined some discussions and provided suggestions. Q.Q. and H.Z. wrote the manuscript with advice from H.F., S.M.F, S.Y.L. and W.L.Y.

\section*{Additional information}

\textbf{Competing financial interests:} The authors declare no competing financial interests.

\end{document}